\newtheorem{theorem}{Theorem}
\newtheorem{definition}{Definition}
\newtheorem{example}{Example}
\newtheorem{problem}{Problem}
\def\mkfancyprefix#1#2{%
\expandafter\def\csname fancyref#1labelprefix\endcsname{#1}%
\begingroup\def\x{\endgroup\frefformat{plain}}%
    \expandafter\x\csname fancyref#1labelprefix\endcsname
    {\MakeLowercase{#2}\fancyrefdefaultspacing##1}%
\begingroup\def\x{\endgroup\Frefformat{plain}}%
    \expandafter\x\csname fancyref#1labelprefix\endcsname
    {#2\fancyrefdefaultspacing##1}%
\begingroup\def\x{\endgroup\frefformat{vario}}%
    \expandafter\x\csname fancyref#1labelprefix\endcsname
    {\MakeLowercase{#2}\fancyrefdefaultspacing##1##3}%
\begingroup\def\x{\endgroup\Frefformat{vario}}%
    \expandafter\x\csname fancyref#1labelprefix\endcsname
    {#2\fancyrefdefaultspacing##1##3}%
}
\fancyrefchangeprefix{\fancyrefeqlabelprefix}{eqn}
\newcommand{\cref}[1]{\Fref{#1}}
\newcommand{\removelatexerror}{\let\@latex@error\@gobble}
\newcommand{\printalgoIEEE}[1]
{{\centering
\scalebox{0.97}{
\removelatexerror
\begin{tabular}{p{\columnwidth}}
\begin{algorithm}[H]
 \begin{small}
 #1
 \end{small}
\end{algorithm}
\end{tabular}
}
}
}
\def\ve#1{{\mathchoice{\mbox{\boldmath$\displaystyle #1$}}%
              {\mbox{\boldmath$\textstyle #1$}}%
              {\mbox{\boldmath$\scriptstyle #1$}}%
              {\mbox{\boldmath$\scriptscriptstyle #1$}}}}
\newcommand{\Romannumcolor}[1]{\ensuremath{\textcolor{blue}{\Romannum{#1}}}}
\newcommand{\Fq}{\mathbb{F}_q}
\definecolor{darkred}{rgb}{0.7,0,0}
\newcommand{\smallsum}{{\textstyle\sum\nolimits}}
\renewcommand{\c}{\ve c}
\renewcommand{\r}{\ve r}
\newcommand{\e}{\ve e}
\newcommand{\Perr}{\mathrm{P}_{\mathrm{err}}}
\newcommand{\Prob}{\mathrm{P}}
\newcommand{\CRS}{\mathcal{C}_\mathrm{RS}}
\newcommand{\dH}{\mathrm{d}_\mathrm{H}}
\definecolor{ColorSecure}{rgb}{0,0,0} 
\definecolor{ColorNotSecure}{rgb}{0.7,0,0}
\newcommand{\LineTypeNotSecure}{dashed}
\newcommand{\wtH}{\mathrm{wt}_\mathrm{H}}
\newcommand{\Code}{\mathcal{C}}
\newcommand{\Ftwo}{\mathbb{F}_2}
\newcommand{\Acode}{\mathcal{A}}
\newcommand{\Bcode}{\mathcal{B}}
\newcommand{\COC}{\mathcal{C}_\mathrm{C}}
\newcommand{\Fqi}[1]{\mathbb{F}_{q^{#1}}}
\newcommand{\nB}{{n_b}}
\newcommand{\kB}{{k_b}}
\newcommand{\dB}{{d_b}}
\newcommand{\nA}{{n_a}}
\newcommand{\kA}{{k_a}}
\newcommand{\dA}{{d_a}}
\newcommand{\nOC}{{n_\mathrm{C}}}
\newcommand{\kOC}{{k_\mathrm{C}}}
\newcommand{\dOC}{{d_\mathrm{C}}}
\newcommand{\NN}{\mathbb{N}}
\newcommand{\eh}{\ve{h}} 
\newcommand{\ri}{\ve{r}} 
\newcommand{\rr}{\ve{r}'} 
\newcommand{\chat}{\hat{\ve{c}}}
\newcommand{\rhat}{\hat{\ve{r}}}
\newcommand{\frr}{\mathcal{RM}}
\newcommand{\Rstar}{R^\ast}
\newcommand{\gcandidate}{g}
\newcommand{\List}{\mathcal{L}}
\begin{document}

\title{Timing Attack Resilient Decoding Algorithms for Physical Unclonable Functions}
\author{\IEEEauthorblockN{Sven Puchinger$^{1}$, Sven Müelich$^{1}$, Antonia Wachter-Zeh$^{2}$ and Martin Bossert$^{1}$}
\IEEEauthorblockA{
$^1$Institute of Communications Engineering, Ulm University, Ulm, Germany\\
\texttt{\{sven.puchinger | sven.mueelich | martin.bossert\}@uni-ulm.de}\\
$^2$Institute for Communications Engineering, Technical University of Munich, Munich, Germany\\
\texttt{antonia.wachter-zeh@tum.de}
}
}

\maketitle

\begin{abstract}
This paper deals with the application of list decoding of Reed--Solomon codes to a concatenated code for key reproduction using Physical Unclonable Functions.
The resulting codes achieve a higher error-correction performance at the same code rate than known schemes in this scenario.
We also show that their decoding algorithms can be protected from side-channel attacks on the runtime both by masking techniques and by directly modifying the algorithms to have constant runtime.
\end{abstract}
\begin{IEEEkeywords}
Physical Unclonable Functions, Reed--Solomon Codes, List Decoding, Side-Channel Attacks, Timing Attacks
\end{IEEEkeywords}

\section{Introduction}

\noindent
A \emph{Physical Unclonable Function} (PUF) is a digital circuit that possesses an intrinsic randomness resulting from process variations.
This randomness is exploited to generate random keys for cryptographic applications.
An advantage of PUFs over other true random number generators is their ability to reproduce a key on demand.
Thus, no embedded physically secure non-volatile memory is needed.

However, the regeneration of a key is not perfect due to environmental factors such as temperature variations and aging effects of the digital circuit.
These variations can be seen as an erroneous channel and channel coding increases the reliability of key regenerations.
Error-correction methods for this purpose were considered in \cite{bosch2008efficient} (repetition, Reed-Muller (RM), Golay, BCH and concatenated codes), \cite{MaesCryptoPaper2012,MaesDiss2012} (concatenation of a repetition and BCH code), \cite{muelich2014error,puchinger2015error,hiller2015low} (generalized concatenated codes using RM and Reed--Solomon (RS) codes).

So far, most publications about error correction for PUFs have tried to find codes with low-complexity decoding methods (in time, area, etc.) and high decoding performance.
However, as for most other hardware security devices, PUFs need to be resistant against \emph{side-channel attacks}.
Their purpose is to obtain information about the secret by measurements, such as timing, energy consumption or electromagnetic fields. Throughout this paper, we only deal with side-channel attacks on the runtime, often called \emph{timing attacks}.

We consider RS codes in a concatenated coding scheme, where we use list decoding in order to increase the decoding radius beyond half the minimum distance.
In this way, smaller block error probabilities\footnote{In PUF literature, block error probability is often called \emph{failure rate}.} than the codes/decoders proposed in \cite{bosch2008efficient,MaesCryptoPaper2012,MaesDiss2012,muelich2014error,puchinger2015error,hiller2015low} can be achieved.

In addition, we protect the decoding algorithm from timing attacks.
We prove that the masking technique introduced in~\cite{merli2013protecting} is information-theoretically secure and propose methods for preventing attacks on decoders with unmasked inputs.

Section~\ref{sec:preliminaries} deals with preliminaries.
We propose to use list decoding of RS codes in error correction for PUFs in Section~\ref{sec:construction} and analyze their performance.
Sections~\ref{sec:preventing}, \ref{sec:Masking}, and \ref{sec:Classical} present ideas of preventing timing-attacks on the list decoding algorithm and Section~\ref{sec:conclusion} concludes the paper.

\section{Prelimiaries}\label{sec:preliminaries}

\noindent
In this paper, $\Code=\Code(q; n,k,d)$ is a linear code over a finite field $\Fq$ ($q$ prime power) of length $n$, dimension $k$ and minimum distance $d$. If the field is clear from the context, we write $\Code(n,k,d)$.
We use the classical Shannon entropy
\begin{align*}
H(X) = -\textstyle\sum_{x}f_X(x) \log_2(f_X(x)), 
\end{align*}
where the input $X$ is considered to be a random variable. E.g., if $\c$ is a codeword that is drawn uniformly at random from a code $\Code(n,k,d)$, its entropy $H(\c)$ is $k$.

\subsection{Error-Correction in PUF-based Key Reproduction}

We briefly describe key reproduction using PUFs with the \emph{code-offset} method\footnote{We consider only linear codes in this paper while the code-offset method generally also works for non-linear codes.} \cite{Dodis2008}, as illustrated in Figure~\ref{fig:puf}. A comprehensive overview of PUFs and how to use error-correction for key reproduction can be found in \cite{MaesDiss2012,boehm2012puf,wachsmann2015puf}.

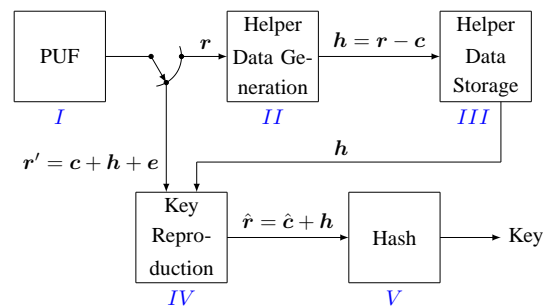
\begin{figure}[htb]
\begin{center}
{
\resizebox{0.4\textwidth}{!}{
\centering
\begin{tikzpicture}[scale=0.5]

\draw (0,0) -- (0,3);
\draw (0,3) -- (3,3);
\draw (3,3) -- (3,0);
\draw (3,0) -- (0,0);
\draw (1.5,1.5) node {PUF};
\draw (1.5,0) node [below] {$\Romannumcolor{1}$};

\draw (3,1.5) -- (4.5,1.5);
\draw[->,>=latex] (5,1.5-0.86602540378) -- (5,-3);
\draw[fill] (4.5,1.5) circle (2pt);
\draw[fill] (5.5,1.5) circle (2pt);
\draw[fill] (5,1.5-0.86602540378) circle (2pt);
\draw (5,-2) node [left] {$\rr = \c + \eh + \e$};
\draw [domain=-80:20] plot ({4.5+cos(\x)}, {1.5+sin(\x)});
\draw[->,>=latex] (4.5,1.5) -- (5,1.5-0.86602540378);

\draw (7,0) -- (7,3);
\draw (7,3) -- (10,3);
\draw (10,3) -- (10,0);
\draw (10,0) -- (7,0);
\draw (8.5,2.5) node {Helper};
\draw (8.5,1.5) node {Data Ge-};
\draw (8.5,0.5) node {neration};
\draw (8.5,0) node [below] {$\Romannumcolor{2}$};

\draw[->,>=latex] (5.5,1.5) -- (7,1.5);
\draw[->,>=latex] (10,1.5) -- (14,1.5);
\draw (6.3,1.5) node [above] {$\ri$};
\draw (12,1.5) node [above] {$\eh = \ri-\c$};

\draw (14,0) -- (14,3);
\draw (14,3) -- (17,3);
\draw (17,3) -- (17,0);
\draw (17,0) -- (14,0);
\draw (15.5,2.5) node {Helper};
\draw (15.5,1.5) node {Data};
\draw (15.5,0.5) node {Storage};
\draw (15.1,0) node [below] {$\Romannumcolor{3}$};

\draw (16,0) -- (16,-2);
\draw (16,-2) -- (6,-2);
\draw[->,>=latex] (6,-2) -- (6,-3);
\draw (10.75,-2) node [above] {$\eh$};

\draw (4,-3) -- (7,-3);
\draw (7,-3) -- (7,-6);
\draw (7,-6) -- (4,-6);
\draw (4,-6) -- (4,-3);
\draw (5.5,-3.5) node {Key};
\draw (5.5,-4.5) node {Repro-};
\draw (5.5,-5.5) node {duction};
\draw (5.5,-6) node [below] {$\Romannumcolor{4}$};

\draw[->,>=latex] (7,-4.5) -- (11,-4.5);
\draw (9,-4.5) node [above] {$\rhat = \chat + \eh$};

\draw (11,-3) -- (14,-3);
\draw (14,-3) -- (14,-6);
\draw (14,-6) -- (11,-6);
\draw (11,-6) -- (11,-3);
\draw (12.5,-4.5) node {Hash};
\draw (12.5,-6) node [below] {$\Romannumcolor{5}$};

\draw[->,>=latex] (14,-4.5) -- (16,-4.5);
\draw (16,-4.5) node [right] {Key};

\end{tikzpicture}
}
}
\end{center}
\caption{Key Generation and Reproduction based on PUFs \cite{puchinger2015error}.} 
\label{fig:puf}
\end{figure}

An initial \emph{response} $\ri \in \Ftwo^n$ with entropy $H(\ri) \approx n$ is generated by the \emph{PUF} ($\Romannumcolor{1}$) and a random codeword $\c \in \Code$ of a binary linear code $\Code(n,k,d)$ is subtracted from $\ri$ in the \emph{Helper Data Generation} ($\Romannumcolor{2}$). The resulting \emph{helper data} $\eh$ is then stored in the \emph{Helper Data Storage} ($\Romannumcolor{3}$) and can be made publicly available since knowing $\eh$ leaves the attacker with an uncertainty of the choice of the codeword. More precisely, for a uniformly drawn codeword, we obtain
\begin{align*}
H(\ri \, | \, \eh) &= H(\ri , \c) - H(\eh) = H(\ri) + H(\c) - H(\eh) \\
&\geq H(\ri) - (n-k) \approx k.
\end{align*}

In the \emph{reproduction phase}, the PUF outputs a response $\r'~\in~\Ftwo^n$, which differs from $\ri$ by an error $\e$ whose physical causes are environmental conditions such as temperature and aging, and we can write $\rr = \ri + \e$,
where $\e$ is often\footnote{In practice, each PUF bit exhibits a unique individual bit error rate due to the imperfect behavior of the digital circuit. Some papers therefore consider different channel models, cf.~\cite{maes2013accurate,delvaux2015helper}.} modelled as a \emph{binary symmetric channel} (BSC) with crossover probability $p$ (e.g., $p=0.14$ in \cite{MaesCryptoPaper2012}).

In order to reproduce the original sequence $\ri$, the \emph{Key Reproduction} unit ($\Romannumcolor{4}$) subtracts the helper data from $\rr$ and decodes the resulting word
\begin{align*}
\rr-\eh = \c + \eh + \e - \eh = \c + \e
\end{align*}
using a decoder of $\Code$ and obtains a codeword $\chat$. If the number of errors $\wtH(\e)$ is within the error-correction capability of the decoder, $\chat = \c$ and we can compute the original sequence as $\ri = \chat+\eh$.
The result is then usually hashed ($\Romannumcolor{5}$) in order to obtain keys of length $m \leq k$, which ideally are uniformly distributed over $\Ftwo^{m}$.

\subsection{Reed--Solomon Codes and List Decoding}
\label{subsec:ListDecoding}

\emph{Reed--Solomon} (RS) codes are algebraic codes with a variety of applications, the largest possible minimum distance, and
efficient decoding algorithms, both for decoding up to and beyond half the minimum distance.
Let $q$ be a prime power and $\Fq$ be the finite field of size $q$ and let $\Fq[x]$ denote the polynomial ring over $\Fq$.
\begin{definition}
Let $\alpha_1,\dots,\alpha_n \in \Fq$ be distinct.
An $(n,k)$ RS code of dimension $k<n$ and length $n$ is given by the set
\begin{align*}
\CRS = \left\{ (f(\alpha_1), \dots, f(\alpha_n)): f \in \Fq[x], \deg f < k \right\} \subseteq \Fq^n.
\end{align*}
\end{definition}
It can be shown that the minimum distance of an $(n,k)$ RS code is $d = n-k+1$.
There are several algorithms for uniquely decoding up to $\left\lfloor(d-1)/2\right\rfloor$ errors with RS codes, see e.g., \cite{bossert1999channel}.
\emph{List decoding} generalizes this concept for $\tau > \left\lfloor(d-1)/2\right\rfloor$ errors.
A \emph{list decoder} guarantees to return a list of all codewords $\c$ that fulfill $\dH(\c,\r) \leq \tau$
for a given decoding radius $\tau$ and the received word $\r$.
For RS codes, the Guruswami--Sudan decoding algorithm~\cite{guruswami1999improved} 
accomplishes list decoding in polynomial time for any $\tau <  n-\sqrt{n(k-1)}$.
The algorithm is based on the following interpolation problem.
\begin{problem}\label{prob:GS}
Given $\r = (r_1,\dots,r_n)\in \Fq^n$, find a non-zero bivariate polynomial $Q(x,y) \in \Fq[x,y]$ of the form $Q(x,y) = \sum_{j=0}^{\ell}{Q_j(x) y^j}$, such that for given integers $s$, $\tau$ and $\ell$:
\begin{enumerate}
	\item $(\alpha_i,r_i)$ are zeros of $Q(x,y)$ of multiplicity $s$,\\ $\forall i=1,\dots,n,$
	\item $\deg Q_j(x) \leq s(n-\tau)-1-j(k-1)$, $\forall j=0,\dots,\ell.$
\end{enumerate}
\end{problem}
The multiplicity $s$ can always be chosen large enough such that any $\tau < n-\sqrt{n(k-1)}$ can be achieved.
The Guruswami--Sudan algorithm returns a list of all polynomials that satisfy $(y-f(x))|Q(x,y)$.
It was proven in \cite{guruswami1999improved} that this list of  polynomials includes all evaluation polynomials $f(x)$, which generate codewords with $\dH(\c,\r) \leq \tau$.
The size of this list is bounded by a polynomial function in the code-length whenever $\tau<n-\sqrt{n(k-1)}$.
The algorithm consists of two main steps: the interpolation step and the root-finding step.
There are several efficient implementations, for both, the interpolation step~\cite{Alekhnovich2005Linear,ZehGentnerAugot-FIAGuruswamiSudan_2011} and the root-finding step~\cite{RothRuckenstein2000}.
Also, efficient VLSI implementations exist, e.g.~\cite{gross2002vlsi}.

\subsection{Reed--Muller Codes}
\label{subsec_rm}

A \emph{Reed--Muller} (RM) code $\frr(r,m)$ of order $r$ with $r \leq m$ is a binary linear code with parameters $n = 2^{m}$, $k = \sum_{i = 0}^{r} \binom{m}{i}$ and $d=2^{m-r}$.
It can be defined recursively using the \emph{Plotkin Construction} \cite{bossert1999channel}:
\begin{align*}
\frr(r,m) &:= \left\{ (\ve{a} | \ve{a} + \ve{b} ) : \begin{array}{l} \ve{a} \in \frr(r,m-1) \\ \ve{b} \in \frr(r-1,m-1) \end{array}\right\}
\end{align*}
with $\frr(0,m) := \Code(2^m,1,2^m)$ (\emph{Repetition code}) and $\frr(m-1,m) := \Code(2^m,2^m-1,2)$ (\emph{Parity Check code}) for all $m$.
RM codes have been proposed for PUF key reproduction in \cite{maes2009low,Hiller2012,muelich2014error,puchinger2015error}, and an efficient implementation of the decoding algorithm in FPGAs was presented in \cite{hiller2015low}.

\subsection{Concatenated Codes}

Concatenation \cite{forney1966concatenated} of two linear codes is a technique for generating new codes from existing ones, while keeping encoding and decoding complexities small.

We describe code concatenation as in \cite{sendrier1998concatenated}.
Let $\Bcode(q;\nB,\kB,\dB)$ (\emph{inner code}) and $\Acode(q^\kB;\nA,\kA,\dA)$ (\emph{outer code}) be two linear codes for a suitable choice of $q,\nB,\kB,\dB,\nA,\kA$ and $\dA$.
We use an encoding mapping for the code $\Acode$, i.e., an $\Fq$-linear map $\theta \, : \, \Fqi{\kB} \, \to \, \Bcode$. We can extend the mapping to matrices by applying it row-wise:
\begin{eqnarray*}
\Theta \, : \, \Fqi{\kB}^\nA \, &\to& \, \Bcode^\nA \\
\begin{bmatrix}a_1 \\ \vdots \\ a_\nA \end{bmatrix} \, &\mapsto& \, \begin{bmatrix}\theta(a_1) \\ \vdots \\ \theta(a_\nA) \end{bmatrix}.
\end{eqnarray*}

\begin{definition}[Concatenated Code]
Let $\Acode$, $\Bcode$, $\nA$ and $\Theta$ be as above. The corresponding \emph{concatenated code} is given as
\begin{eqnarray*}
\COC = \Theta(\Acode) \subseteq \Bcode^{\nA}
\end{eqnarray*}
\end{definition}

We call the set of positions containing the $i$th inner codeword $\theta(a_i)$ the $i$th \emph{inner block}. Codewords are often represented as matrices, where the $i$th row contains the $i$th block.
Due to its construction, a concatenated code is $\Fq$-linear.
The code has $(q^\kB)^\kA = q^{\kB \cdot \kA}$ codewords, each of it consisting of $\nA$ many codewords from $\Bcode$, resulting in a code-length of $\nA \cdot \nB$ elements of $\Fq$.
Thus, the code has parameters
\begin{eqnarray*}
\COC(q;\nOC= \nA \cdot \nB, \kOC = \kB \cdot \kA, \dOC),
\end{eqnarray*}
where $\dOC \geq \dB \cdot \dA$ is the minimum distance.
Although $\dOC$ might be small, often a lot more errors than half-the-minimum distance can be corrected.
Concatenation of codes and standard decoders have been suggested for the PUF scenario in~\cite{bosch2008efficient, MaesCryptoPaper2012}.

The construction can be extended to \emph{generalized concatenated codes} \cite{blokh1974coding}, see also~\cite{bossert1999channel}.
Generalized concatenated codes were proposed for error correction in key reproduction using PUFs in \cite{muelich2014error,puchinger2015error} and a low-complexity decoding design for FPGAs was presented in \cite{hiller2015low}.

\section{Code Constructions and List Decoding in the PUF Scenario}
\label{sec:construction}

Choosing codes and decoders for error correction in key reproduction using PUFs is subject to many constraints that arise from their physical properties.
Typical design criteria \cite{MaesCryptoPaper2012,puchinger2015error} are listed below.
\begin{itemize}
\item Choose a dimension that fulfills $H(\mathrm{key}) \leq H(\ri)-n+k$, where $H(\mathrm{key})$ is the desired entropy of the extracted key.
\item Minimize the code-length $n$.
\item Obtain a \emph{block error probability} $\Perr$ that is below a certain threshold (e.g., $10^{-9}$).
\item Find efficient decoders (in time, area, memory, etc.).
\item Provide resistance to side-channel attacks (with respect to time, energy, electro-magnetic radiation, etc.).
\end{itemize}
In the following, we recall one of the code constructions in~\cite{puchinger2015error} and show that by using list decoding we can improve the error-correction performance of this scheme.

\subsection{Code Construction}

As in \cite{puchinger2015error}, we choose the inner code to be a binary Reed--Muller code $\Bcode(2;\nB,\kB,\dB) = \frr(r,m)$ (cf.~Section~\ref{subsec_rm}) and a Reed--Solomon code $\Acode(2^\kB;\nA,\kA,\nA-\kA+1) = \CRS(\nA,\kA)$ (cf.~Section~\ref{subsec:ListDecoding}) as outer code.

\subsection{Decoding}

Decoding works in two steps.
First, the inner blocks of the received word are decoded using the inner RM code $\Bcode$.
The respective decoding result either corresponds to an element in $\Fqi{\kB}$ or to an erasure.
Afterwards, the vector containing the decoding results of the inner blocks is decoded in the RS code.

If a list decoder (cf.~Section~\ref{subsec:ListDecoding}) is used in this step, more errors can be corrected than with power decoding, which was proposed in \cite{puchinger2015error}.
The following example compares our coding scheme with known ones for the scenario considered in \cite{bosch2008efficient,MaesCryptoPaper2012,MaesDiss2012,muelich2014error,puchinger2015error} (BSC with $p=0.14$, $H(\r)\geq 128$, goal $\Perr<10^{-9}$).

\begin{example}
We consider the construction in \cite[Section~IV-C]{puchinger2015error}, namely an inner RM code with parameters $\Bcode(2;32,6,16) = \frr(1,5)$ and an outer RS code $\Acode(2^6;64,22,43)$. The resulting concatenated code has parameters $\COC(2; 2048, 132, \geq 688)$.
Using the algorithms proposed in \cite{puchinger2015error}, the resulting block error probability is $\Perr \approx 6.79 \cdot 10^{-37}$.

Maximum likelihood decoding of the inner RM code transforms the channel into a binary error and erasure channel with $\Prob(\text{error}) = 0.003170$ and  $\Prob(\text{erasure}) = 0.017605$ \cite{puchinger2015error}.
Since the minimum distance of the RS codes is $d=43$, unique decoding is possible up to $21$ errors and list decoding with the Guruswami--Sudan algorithm up to $\lceil n-\sqrt{n(k-1)}\rceil-1 = 27$ errors. When erasures are present, the Guruswami--Sudan decoder simply considers only non-erased positions in the interpolation step.
Let $t$ and $\varepsilon$ denote the number of errors and erasures, respectively.
Then, the block error probability is
\begin{align*}
\Perr &= \sum\limits_{i=0}^{n} \Prob(\varepsilon=i) \Prob(t\geq n-i-\sqrt{(n-i)(k-1)}) \\
&\approx 3.5308 \cdot 10^{-46},
\end{align*}
which is significantly smaller than for unique decoding, cf.~\cite{puchinger2015error}.

When replacing the outer code by the RS code $\CRS(2^6;34,22,13)$, the concatenated code has parameters $\COC(2; 1088, 132, \geq 208)$ and, using list decoding, the block error probability is
\begin{align*}
\Perr \approx 1.9981 \cdot 10^{-10} < 10^{-9},
\end{align*}
which is approximately the same as in \cite{puchinger2015error} while reducing the length of the concatenated code from $1152$ to $1088$.

Using generalized concatenated codes in combination with list decoding, also the block error probability of the other code constructions in \cite{puchinger2015error} can be decreased. Since the error correction schemes in \cite{puchinger2015error} decreased the block error probabilities and code-lengths simultaneously
compared to the constructions in \cite{bosch2008efficient,MaesCryptoPaper2012,MaesDiss2012,muelich2014error}, our results also improve upon them.
\end{example}

\subsection{Optimal Rates in the PUFKY~\cite{MaesCryptoPaper2012} scenario}

In general, we would like to know how close to an optimal solution our error correction schemes are. When comparing it to the capacity of the binary symmetric channel,
\begin{align*}
C = 1 -h(p) = 1 + p\log_2(p) + (1-p)\log_2(1-p),
\end{align*}
one will notice that the rates of most of the existing schemes are far away from this upper bound, which is expectable for finite block lengths.
It was proven in \cite[Theorem~52]{polyanskiy2010channel} that the maximal achievable rate $\Rstar(n,p,\Perr)$ of a code of length $n$ whose codewords are transmitted through a BSC with crossover probability $0<p<\tfrac{1}{2}$ with maximal block error probability $\Perr$ is
\begin{align*}
\Rstar(n,p,\Perr) = C-\sqrt{\tfrac{V}{n}} Q^{-1}(\Perr) + \tfrac{\log_2 n}{2n} + O(\tfrac{1}{n}),
\end{align*}
where
\begin{align*}
V = p (1-p) \log_2^2\left( \tfrac{1-p}{p} \right), \quad
Q(x) = {\int\nolimits}_{x}^{\infty} \tfrac{1}{\sqrt{2 \pi}} \mathrm{e}^{-\tfrac{y^2}{2}} dy.
\end{align*}

In \cite{MaesCryptoPaper2012,muelich2014error,puchinger2015error}, a BSC with crossover probability $p=0.14$ was considered and $\Perr < 10^{-9}$ was demanded. In this case, the capacity of the BSC is $C \approx 0.5842$, but the actual maximal achievable rates $\Rstar$ are much smaller. Table~\ref{tab:summary} shows how close the rates of existing code constructions and of our new construction are to the optimal rates.
\begin{table}[h]
\caption{Comparison between code constructions and decoders for PUFs based on concatenated codes.}
\label{tab:summary}

{
\renewcommand{\arraystretch}{1.5}
\centering
\begin{tabular}{p{1.6cm}|p{1.4cm}|p{0.4cm}|p{0.5cm}|p{0.7cm}|p{0.7cm}|p{0.7cm}}
\small
$\Acode$/$\Bcode$ (ref.)
& $\Perr$
& $k$ 
& $n$
& $R$
& $\Rstar$
& $R/\Rstar$ \\
\hline \hline
BCH/Rep. \cite{MaesCryptoPaper2012}
& $1.0 \cdot10^{-9}$
& $174$
& $2226$
& $0.0782$
& $0.3027$
& $0.2582$ \\
\hline
RS/RM$^\mathrm{u}$ \cite{puchinger2015error}
& $1.2 \cdot 10^{-10}$
& $132$
& $1152$
& $0.1146$
& $0.2506$
& $0.4573$ \\
\hline
RS/RM$^{\mathrm{\ell},\mathrm{t}}$
& $2.0 \cdot 10^{-10}$
& $132$
& $1088$
& $0.1213$
& $0.2481$
& $0.4890$ \\
\end{tabular}
}

\vspace{2ex}

\emph{Legend:} Outer code $\Bcode$, inner code $\Acode$, block error probability $\Perr$, rate $R=k/n$, maximal possible rate $\Rstar(n,0.14,\Perr)$, ratio to optimality $R/\Rstar$.
$^\mathrm{u}$Decoder based on unique decoding of the RS code (cf.~\cite{puchinger2015error}).
$^\mathrm{\ell}$Decoder based on list decoding of the RS code (cf.~Section~\ref{subsec:ListDecoding}).
$^{\mathrm{t}}$This paper.
\end{table}

We conclude that using list decoding, the error-correction capabilities of (generalized) concatenated code constructions based on outer RS codes can be improved significantly. Also, the coding schemes achieve approximately half of the maximum possible rates in the scenario considered in \cite{MaesCryptoPaper2012}, which is a large value for a practical coding scheme.
However, this gain comes at the cost of increased time and space complexity and therefore a larger power and area consumption.

\section{Preventing Timing Attacks}
\label{sec:preventing}

This section deals with securing the decoding algorithms of the code constructions considered in this paper against side-channel attacks on the runtime.
A side-channel attacker tries to obtain information from the hardware implementation of the PUF, which includes
runtime, power consumption, and electromagnetic radiation.
For example, ring oscillator PUFs compare the frequencies of two ring oscillators and therefore inevitably induce
an electro-magnetic emission depending on their frequencies
that leads to side information~\cite{merli2011side}.
The paper~\cite{merli2011side} deals with side-channel attacks on the helper data.
In~\cite{merli2013protecting}, it was proposed to add another random codeword (called \emph{codeword masking}) on the helper data before the key reproduction. 
In~\cite{dai2009study}, it was analyzed how much information is leaked from the power consumption when storing a codeword of a single-parity check code in a static memory.

However, to the best of our knowledge, there are no publications that focus on attacking the \emph{decoding process} itself, e.g., the runtime and power consumption while executing the decoding algorithm.
It is therefore important that a decoder has constant runtime and constant power consumption, independent of the received word.
In the following, we design a list decoder with constant runtime.\footnote{Unlike most publications in the field of side-channel attacks, we do not provide an FPGA implementation, but analyze our algorithm theoretically.
Measuring the side-channel attack resilience of such an implementation is a necessary step for further research.}

We focus on side-channel attacks of the decoding algorithm. Therefore, assume that only parts of the key reconstruction functions are attackable, as illustrated in Figure~\ref{fig:attacker_model}.

\begin{figure}[h]
\begin{center}
\resizebox{0.45\textwidth}{!}{
\begin{tikzpicture}[scale=0.5]

\draw[thick, ColorSecure] (0,0) -- (0,3);
\draw[thick, ColorSecure] (0,3) -- (3,3);
\draw[thick, ColorSecure] (3,3) -- (3,0);
\draw[thick, ColorSecure] (3,0) -- (0,0);
\draw (1.5,1.5) node {PUF};
\draw (1.5,0) node [below] {$\Romannumcolor{1}$};

\draw[thick, ColorSecure] (3,1.5) -- (4.5,1.5);
\draw[thick, ColorSecure,->,>=latex] (5,1.5-0.86602540378) -- (5,-3);
\draw[ColorSecure, fill] (4.5,1.5) circle (2pt);
\draw[ColorSecure, fill] (5.5,1.5) circle (2pt);
\draw[ColorSecure, fill] (5,1.5-0.86602540378) circle (2pt);
\draw (5,-2) node [left] {$\rr = \c + \eh + \e$};
\draw[thick, ColorSecure] [domain=-80:20] plot ({4.5+cos(\x)}, {1.5+sin(\x)});
\draw[thick, ColorSecure, ->,>=latex] (4.5,1.5) -- (5,1.5-0.86602540378);

\draw[thick, ColorSecure] (9,0) -- (9,3);
\draw[thick, ColorSecure] (9,3) -- (12,3);
\draw[thick, ColorSecure] (12,3) -- (12,0);
\draw[thick, ColorSecure] (12,0) -- (9,0);
\draw (10.5,2.5) node {Helper};
\draw (10.5,1.5) node {Data Ge-};
\draw (10.5,0.5) node {neration};
\draw (10.5,0) node [below] {$\Romannumcolor{2}$};

\draw[thick, ColorSecure, ->,>=latex] (5.5,1.5) -- (9,1.5);
\draw (7.3,1.5) node [above] {$\ri = \c + \eh$};

\draw[thick, \LineTypeNotSecure, ColorNotSecure, ->,>=latex] (12,1.5) -- (14,1.5);
\draw (13,1.5) node [above] {$\eh$};

\draw[thick, \LineTypeNotSecure, ColorNotSecure] (14,0) -- (14,3);
\draw[thick, \LineTypeNotSecure, ColorNotSecure] (14,3) -- (17,3);
\draw[thick, \LineTypeNotSecure, ColorNotSecure] (17,3) -- (17,0);
\draw[thick, \LineTypeNotSecure, ColorNotSecure] (17,0) -- (14,0);
\draw (15.5,2.5) node {Helper};
\draw (15.5,1.5) node {Data};
\draw (15.5,0.5) node {Storage};
\draw (15.1,0) node [below] {$\Romannumcolor{3}$};

\draw[thick, \LineTypeNotSecure, ColorNotSecure] (16,0) -- (16,-2);
\draw[thick, \LineTypeNotSecure, ColorNotSecure] (19,-2) -- (6,-2);
\draw[thick, \LineTypeNotSecure, ColorNotSecure, ->,>=latex] (6,-2) -- (6,-3);
\draw[thick, \LineTypeNotSecure, ColorNotSecure, ->,>=latex] (19,-2) -- (19,-3);
\draw (10.75,-2) node [above] {$\eh$};

\draw[thick, ColorSecure] (4,-3) -- (7,-3);
\draw[thick, ColorSecure] (7,-3) -- (7,-6);
\draw[thick, ColorSecure] (7,-6) -- (4,-6);
\draw[thick, ColorSecure] (4,-6) -- (4,-3);
\draw (5.5,-3.5) node {Pre-};
\draw (5.5,-4.5) node {Proces-};
\draw (5.5,-5.5) node {sing};
\draw (6,-6) node [below] {$\Romannumcolor{4}$};

\draw[thick, ColorSecure, ->,>=latex] (7,-4.5) -- (11,-4.5);
\draw (9,-4.5) node [above] {$\varphi(\c+\e)$};

\draw[thick, \LineTypeNotSecure, ColorNotSecure] (11,-3) -- (14,-3);
\draw[thick, \LineTypeNotSecure, ColorNotSecure] (14,-3) -- (14,-6);
\draw[thick, \LineTypeNotSecure, ColorNotSecure] (14,-6) -- (11,-6);
\draw[thick, \LineTypeNotSecure, ColorNotSecure] (11,-6) -- (11,-3);
\draw (12.5,-4.5) node {Decoding};
\draw (12.5,-6) node [below] {$\Romannumcolor{5}$};

\draw[thick, ColorSecure, ->,>=latex] (14,-4.5) -- (18,-4.5);
\draw (16,-4.5) node [above] {$\varphi(\chat)$};

\draw[thick, ColorSecure] (18,-3) -- (21,-3);
\draw[thick, ColorSecure] (21,-3) -- (21,-6);
\draw[thick, ColorSecure] (21,-6) -- (18,-6);
\draw[thick, ColorSecure] (18,-6) -- (18,-3);
\draw (19.5,-3.5) node {Post-};
\draw (19.5,-4.5) node {Proces-};
\draw (19.5,-5.5) node {sing};
\draw (20,-6) node [below] {$\Romannumcolor{6}$};

\draw[thick, ColorSecure] (5,-6) -- (5,-8);
\draw[thick, ColorSecure] (5,-8) -- (19,-8);
\draw[thick, ColorSecure,->, >=latex] (19,-8)-- (19,-6);
\draw (12.5,-8) node [below] {$\varphi^{-1}$}; 

\draw[thick, ColorSecure, ->,>=latex] (21,-4.5) -- (23,-4.5);
\draw (23,-4.5) node [right] {Key};

\draw (18.5,2.5) node [right] {Legend:};
\draw[thick, ColorSecure] (19,1.5) -- (20,1.5);
\draw (20,1.5) node [right] {not attackable};
\draw[thick, \LineTypeNotSecure, ColorNotSecure] (19,0.5) -- (20,0.5);
\draw (20,0.5) node [right] {attackable};

\end{tikzpicture}
}
\end{center}
\caption{Attacker Model.} 
\label{fig:attacker_model}
\end{figure}
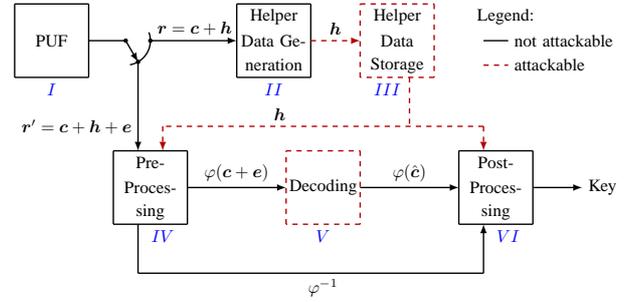
{
The PUF ($\Romannumcolor{1}$) and the Helper Data Generation ($\Romannumcolor{2}$) are assumed to be secure here.
As mentioned before, the Helper Data Storage ($\Romannumcolor{3}$) can be read by an attacker without obtaining more information about $\ri$ than contained in the random choice of $\c$.
Compared to the model in Figure~\ref{fig:puf}, the Key Reproduction unit is subdivided into \emph{Preprocessing} ($\Romannumcolor{4}$), the \emph{Decoder} ($\Romannumcolor{5}$), and the \emph{Post-Processing} unit ($\Romannumcolor{6}$). The latter also includes the hashing of the key here.
Preprocessing ($\Romannumcolor{4}$) is assumed to be not attackable. We distinguish two types of preprocessing:

\begin{enumerate}
\item \emph{Classical preprocessing}: Compute $\c + \e = \rr- \eh$ and hand it over to the Decoding ($\Romannumcolor{5}$) unit.
\item \emph{Masking}: Choose random function $\varphi$ such that the decoder can map $\varphi(\c + \e)$ into $\varphi(\c)$, but even if $\varphi(\c + \e)$ can be obtained by an attacker, the uncertainty about $\ri$ is not decreased.
\end{enumerate}
The Decoder ($\Romannumcolor{5}$) can be attacked.
In the Post-Processing unit ($\Romannumcolor{6}$), we compute $\varphi^{-1}(\varphi(\chat))+\eh = \chat+\eh$.
If decoding was successful, we obtain the original response $\chat + \eh = \c + \eh = \ri$.
The key is then often computed as a hash of $\ri$ \cite{MaesCryptoPaper2012}.

\section{Attack Resistance by Masking}
\label{sec:Masking}

\subsection{Codeword Masking}

One method to hide the actual codeword $\c$ from an attacker who can retrieve the processed data $\varphi(\c + \e)$ is the \emph{codeword masking} technique proposed in~\cite{merli2013protecting}, where a random codeword $\c'$ is chosen and added to $\c + \e$, i.e.,
\begin{align*}
\varphi(\c + \e) = \c'+\c+\e.
\end{align*}
The technique is based on general masking schemes for preventing DPAs.
In~\cite{merli2013protecting}, no proof was given that the method actually masks well.
The following theorem proves that even if an attacker is able to retrieve both the helper data~$\eh$ and the masked word $\c'+\c+\e$, the remaining uncertainty is still large enough.

\begin{theorem}
$H(\ri \, | \, (\c'+\c+\e,\eh)) \geq H(\ri) - (n-k)$
\end{theorem}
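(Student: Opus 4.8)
The plan is to reduce the statement to a chain of elementary entropy identities, exactly mirroring the computation already carried out in the excerpt for $H(\ri \mid \eh)$. The only genuinely new ingredient is handling the extra random codeword $\c'$, and the key observation is that $\c'$ is chosen independently of everything else and uniformly from $\Code$, so conditioning on the pair $(\c'+\c+\e, \eh)$ is, from the point of view of $\ri$, no more informative than conditioning on $\eh$ alone once $\c'$ is accounted for.

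First I would introduce $\c'$ explicitly into the conditioning: since $\c'$ is independent of $(\ri,\c,\e)$, we have $H(\ri \mid (\c'+\c+\e,\eh)) \geq H(\ri \mid (\c'+\c+\e,\eh,\c'))$, because conditioning on more variables can only decrease entropy. Now given $\c'$, the map $\c'+\c+\e \mapsto \c+\e$ is a bijection, so $H(\ri \mid (\c'+\c+\e,\eh,\c')) = H(\ri \mid (\c+\e,\eh,\c')) = H(\ri \mid (\c+\e,\eh))$, the last step again using independence of $\c'$ from $(\ri,\c,\e)$. Next I would note that $\e$ is likewise independent of $\ri$ and $\c$ (its causes are environmental), and that $\eh = \ri - \c$, so the pair $(\c+\e,\eh)$ is a deterministic function of $(\ri,\c,\e)$; adding the redundant variable only helps, and in fact $H(\ri \mid (\c+\e,\eh)) \geq H(\ri \mid (\c,\e,\eh)) = H(\ri \mid (\c,\e))$ since $\eh$ is determined by $\ri,\c$. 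But $\e$ is independent of $\ri$, so $H(\ri \mid (\c,\e)) = H(\ri \mid \c)$.

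It then remains to bound $H(\ri \mid \c)$ from below. Here I would reuse the argument from the preliminaries: $H(\ri \mid \c) = H(\ri,\c) - H(\c) = H(\ri) + H(\c) - H(\c)$ if $\ri$ and $\c$ are independent — but in the code-offset construction $\c$ is drawn uniformly and independently and then $\eh$ is published, so $\ri$ and $\c$ are indeed independent before publication; thus $H(\ri \mid \c) = H(\ri)$, which already exceeds $H(\ri)-(n-k)$. Alternatively, to match the paper's own bookkeeping, one writes $H(\ri \mid (\c+\e,\eh)) = H(\ri,\c+\e,\eh) - H(\c+\e,\eh) \geq H(\ri) - H(\eh) - H(\c+\e \mid \eh,\ri) + \cdots$; the cleanest route is the independence chain above, so I would present that.

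The main obstacle is bookkeeping rather than depth: one must be careful about \emph{which} independence assumptions are in force (is $\c'$ independent of $\e$? is $\e$ independent of $\c$?) and to invoke only the two standard facts "conditioning reduces entropy" and "invertible transformations preserve conditional entropy." In particular the subtle point is that the inequality $H(\ri\mid\c)\ge H(\ri)-(n-k)$ holds with room to spare (it is in fact an equality to $H(\ri)$), so the $(n-k)$ slack in the theorem is not tight; I would state the proof to yield the weaker published bound directly, since that is all that is claimed, and remark that the stronger equality $H(\ri \mid (\c'+\c+\e,\eh)) = H(\ri \mid \eh)$ in fact holds, showing the mask leaks nothing beyond what $\eh$ already does.
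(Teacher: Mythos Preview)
Your argument contains a genuine error that cannot be repaired within the strategy you outline.

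The specific mistake is the step ``$H(\ri \mid (\c,\e,\eh)) = H(\ri \mid (\c,\e))$ since $\eh$ is determined by $\ri,\c$.'' The implication runs the wrong way: $\eh=\ri-\c$ being a function of $(\ri,\c)$ tells you $H(\eh\mid \ri,\c)=0$, but it does \emph{not} let you drop $\eh$ from the conditioning of $H(\ri\mid\cdot)$. On the contrary, once you know $\c$ and $\eh$ you know $\ri=\c+\eh$ exactly, so $H(\ri\mid \c,\e,\eh)=0$, and your chain collapses to the trivial bound $H(\ri\mid(\c'+\c+\e,\eh))\geq 0$.

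More importantly, the whole route of first conditioning on $\c'$ to reduce to $H(\ri\mid(\c+\e,\eh))$ is fatally lossy and cannot be salvaged by any continuation. Take $\e=\ve 0$ deterministically: then $H(\ri\mid(\c+\e,\eh))=H(\ri\mid(\c,\eh))=0$, which is strictly below $H(\ri)-(n-k)$ whenever $k>0$ and $H(\ri)>n-k$. Yet the theorem still holds in this case, precisely because the mask $\c'$ does real work that your first inequality discards. The paper's proof keeps $\c'$ in play throughout: it uses that $\c+\c'$ is uniform on $\Code$ to get $H(\c'+\e)=H(\c+\c'+\e)$, computes $H(\ri,\c'+\c+\e,\eh)=H(\ri)+H(\c'+\e)+H(\c)$ via the bijection $(\eh,\ri)\leftrightarrow(\c,\ri)$, and bounds $H(\c'+\c+\e,\eh)$ by subadditivity; the $H(\c)=k$ term then survives to give the $-(n-k)$.

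Your closing remark that in fact $H(\ri\mid(\c'+\c+\e,\eh))=H(\ri\mid\eh)$ is correct, but the clean reason is different from what you wrote: set $\tilde\c:=\c+\c'$; since $\c'$ is uniform on $\Code$ and independent of $(\ri,\c,\e)$, the variable $\tilde\c$ is uniform on $\Code$ and independent of $(\ri,\c,\e)$, hence $\tilde\c+\e$ is independent of $(\ri,\eh)$, giving the equality directly. If you rebuild your proof around this observation instead of conditioning on $\c'$, it goes through.
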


\begin{proof}
We know that $\ri,\c,\c',\e$ are pairwise independent. Also, $\c$ and $\c'$ are uniformly drawn from the code, so
\begin{align}
H(\c+\e) = H(\c+\c'+\e). \label{eq:Hce=Hcc'e}
\end{align}
In general, it holds that
\begin{align}
H(\c+\c'+\e,\eh) \leq H(\c+\c'+\e)+H(\eh). \label{eq:H_subadditivity}
\end{align}
Since we can compute $(\ri, \c'+\c+\e, \c)$ from $(\ri, \c'+\c+\e, \eh)$ and vice versa, we have
\begin{align}
&H(\ri, \c'+\c+\e, \eh) = H(\ri, \c'+\c+\e, \c) \notag \\
&= H(\ri \mid (\c'+\c+\e,\c)) + H(\c'+\c+\e, \c) \notag \\
&= H(\ri) + H(\c'+\c+\e, \c) \notag \\
&= H(\ri) + H(\c'+\c+\e \mid \c) + H(\c) \notag \\
&= H(\ri) + H(\c'+\e) + H(\c) \label{eq:Hrcc'eh}
\end{align}
Hence, we obtain
\begin{align*}
&H(\ri \mid (\c'+\c+\e,\eh)) \\
&= H(\ri, \c'+\c+\e, \eh) - H(\c'+\c+\e,\eh) \\
&\overset{\eqref{eq:H_subadditivity},\eqref{eq:Hrcc'eh}}{\geq} H(\ri) + H(\c'+\e) + H(\c) - H(\c+\c'+\e)-H(\eh) \\
&\overset{\eqref{eq:Hce=Hcc'e}}{=} H(\ri) + H(\c) - H(\eh) 
= H(\ri) + k - H(\eh) \\
&\geq H(\ri) - (n-k). \qedhere
\end{align*}
\end{proof}
Note that if $H(\ri) = n$, then $H(\ri \, | \, (\c'+\c+\e,\eh)) \geq k$.

\subsection{Alternative Masking Techniques}

Other than adding a codeword to the processed word, the only masking operations that do not change the Hamming weight of the error (i.e., the hardness of the decoding problem) are the Hamming-metric isometries. Over $\Ftwo^n$, those are exactly all permutations of positions since the other possibility, the Frobenius automorphism $\cdot^2$, is the identity map in $\Ftwo$.

In the case of RS codes, the decoder can handle a permutation $\pi$ of positions since $\pi(\c)$ is also a codeword of an RS code with permuted code locators $\alpha_i$.
Thus, $\pi(\c+\e) = \pi(\c) + \pi(\e)$ with $\wtH(\pi(\e)) = \wtH(\e)$, we can obtain $\pi(\c)$ from $\pi(\c+\e)$ using a decoder for Reed--Solomon codes whenever it is possible to correct $\e$ in $\c+\e$.

Note that if $\pi$ is not an element of the automorphism group of the code, the decoder must know the permutation $\pi$.
If it is in the automorphism group, then $\pi(\c)-\c$ is a codeword and the method is equivalent to codeword masking.

\section{Attack Resistance by Constant Runtime (Classical Preprocessing)}
\label{sec:Classical}

\subsection{Realizing Finite Field Operations}
\label{subsec:fieldop}

The codes used in Section~\ref{sec:construction} can be decoded using algebraic decoding algorithms that perform operations in finite fields.
For error-correction in key regeneration using PUFs, we usually consider fields of characteristic $2$, i.e., $\mathbb{F}_{2^m}$ for some $m \in \NN$.
Motivated by elliptic-curve cryptography, operations in these fields have recently been made resistant against timing-attacks while preserving sufficient speed in \cite{pamula2015fast}.
For small fields, lookup tables could be used. E.g., the field~$\mathbb{F}_{2^6}$ used in the construction in Section~\ref{sec:construction} would require tables of $2^{6 \cdot 2} = 4096$ entries.

Based on these considerations, we assume that field operations in $\mathbb{F}_{2^m}$, also if a zero is involved, are constant in runtime.

\subsection{Outer Code: List Decoding of RS Codes}
\label{subsec:outer_code}

\subsubsection{Interpolation step}

The interpolation step consists of finding a bivariate polynomial
\begin{align*}
Q(x,y) = \smallsum_{\eta=0}^{\ell} Q_\eta(x) y^\eta = \smallsum_{\eta=0}^{\ell} \smallsum_{\mu=0}^{d_\eta} Q_{\eta,\mu} x^\mu y^\eta,
\end{align*}
where $d_\eta = s(n-\tau)-1-\eta(k-1)$, satisfying properties 1)-2) of Problem~\ref{prob:GS}. This corresponds to finding a non-zero solution $Q_{\eta,\mu} \in \Fq$ for $0 \leq \mu \leq d_\eta$ and $0\leq \eta \leq \ell$ of the system
\begin{align*}
\smallsum_{\eta=0}^{\ell} \smallsum_{\mu=0}^{d_\eta} \tbinom{\eta}{h} \tbinom{\mu}{j} Q_{\eta,\mu} \alpha_i^{\mu-j} r_i^{\eta-h}=0
\end{align*}
with $i=0,\dots,n$ and $h+j<s$.

There are many efficient algorithms for finding such a solution which are asymptotically faster than simply solving this system without considering its structure. However, these fast methods might reveal side-information about the processed data since their runtime depends on the received word $\r$.

Therefore, we propose to solve the system using ``naive'' Gaussian elimination where we always apply a row operation, even when an element is already zero (simply add a zero row to it).
The resulting algorithm always performs the same number of field additions and multiplications and therefore its running time does not reveal any information about the processed data.

\subsubsection{Root-Finding Step}

Root-finding can be performed by a modification of the Roth--Ruckenstein algorithm \cite{RothRuckenstein2000}.
The algorithm is outlined in Algorithm~\ref{alg:RR}.

\printalgoIEEE{
\DontPrintSemicolon
\KwIn{$M(x,y)=\smallsum_{\eta=0}^{\ell} M_\eta(x) y^\eta$, $\gcandidate(x)$, $i$, global list $\List$}
\lIf{i=k}{\Return}
$M(x,y) \gets Q(x,y)/x^r$ with $r \in \NN$ maximal \;
$p(y) \gets M(0,y)$ \;
Find roots of $p(y)$ \;
Remove $\gcandidate(x)$ from the global list $\List$ \;
\For{each root $\gamma$}{
Add $\gcandidate(x)+\gamma x^i$ to the global list $\List$ \;
$\mathrm{RR}\left(M(x,x(y-\gamma)), \gcandidate(x)+\gamma x^i, i+1, \List\right)$ \;
}
\caption{$\mathrm{RR}\left( M(x,y), \gcandidate(x), i, \List \right)$}
\label{alg:RR}
}

We need to modify the algorithm slightly as follows:
\begin{itemize}
\item We compute the $i$-th recursion step of all recursive calls before starting to compute the $(i+1)$-th recursion depth.
\item After finishing all recursion steps at depth $i$, fill the list $\List$ with random univariate polynomials of degree $\leq i$ such that the list always contains $\ell(k-1)$ polynomials and mark them as \emph{random} within the global list $\List$.
\item At depth $(i+1)$, $\mathrm{RR}$ is called for all elements of $\List$ with the corresponding bivariate polynomial $M(x,x(y-\gamma))$. If the element is random, also call the algorithm with a random bivariate polynomial of $y$-degree $\leq \ell$ but do not save the results in $\List$.
\end{itemize}
The output $\List$ of the modified algorithm without the \emph{random} entries equals exactly the output of the Roth--Ruckenstein algorithm, so its correctness follows.

\begin{theorem}
Consider Algorithm~\ref{alg:RR} with above modifications. $\mathrm{RR}\left( Q(x,y), 0, 0, \{0\} \right)$ calls $\mathrm{RR}(\cdot)$ exactly $\ell^2(k-1)$ times.
\end{theorem}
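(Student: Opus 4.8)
The claim is that the modified Roth--Ruckenstein algorithm, when started as $\mathrm{RR}(Q(x,y),0,0,\{0\})$, makes exactly $\ell^2(k-1)$ recursive calls. The natural approach is to count calls level by level according to recursion depth $i$, which by the first modification is well-defined: all depth-$i$ calls complete before any depth-$(i+1)$ call begins, and the recursion terminates at depth $i=k$ (so depths run $i=0,1,\dots,k-1$, giving $k-1$ active levels that spawn children, plus the terminating level). The key structural fact supplied by the second and third modifications is that after processing depth $i$ the list $\List$ is padded with ``random'' entries to contain exactly $\ell(k-1)$ polynomials, and at depth $i+1$ the algorithm is invoked once for every element of $\List$ — both genuine and random — where each genuine element additionally may branch on several roots but each random element triggers exactly one call. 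So I would first argue that, regardless of how the genuine Roth--Ruckenstein tree branches, the number of elements in $\List$ entering each depth is pinned to the constant $\ell(k-1)$.

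Second, I would pin down how many calls each list element generates at the next depth. The subtlety is that a genuine entry $\gcandidate(x)$ can have more than one root $\gamma$ of $p(y)=M(0,y)$ and hence spawn several children, while a random entry spawns exactly one. To get a clean constant count I would invoke the degree bound: $p(y)$ has $y$-degree at most $\ell$, so it has at most $\ell$ roots; but more importantly the total number of children summed over all genuine entries at a fixed depth is itself controlled. Here I expect the cleanest route is: the Roth--Ruckenstein tree has at most $\ell$ leaves overall (one per output root, and the output list has size at most $\ell$), hence at each depth the total branching factor summed over genuine nodes is at most... — and this is exactly where the counting must be made exact rather than an inequality. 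The mechanism that makes it exact is that every genuine node is ``completed'' into a width-$\ell(k-1)$ list by random padding, and then at the next level each of those $\ell(k-1)$ slots is called with exactly $\ell$ bivariate polynomials (the real branch plus random fillers up to $y$-degree $\ell$, or $\ell$ random ones if the slot was random) — so the per-level call count is precisely $\ell \cdot \ell(k-1) = \ell^2(k-1)$? I would need to recheck against the stated modification: it says a random element triggers one call with $M(x,x(y-\gamma))$ plus one call with a random bivariate polynomial, i.e. the padding is in the list width, not in a per-element multiplicity. So the correct reading is: at each depth $i+1$, the number of calls equals the list size $\ell(k-1)$, and summing over the $\ell$... no — summing over depths $i=0,\dots,k-1$? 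That gives $k\cdot\ell(k-1)$, not $\ell^2(k-1)$.

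Given that tension, the honest plan is: (i) establish that the list has constant size $\ell(k-1)$ at the start of every depth $\geq 1$; (ii) establish that the genuine part of the computation is a tree of depth exactly $k$ whose every root-to-leaf path has length $k$ and which has exactly $\ell$ leaves, so across \emph{all} depths the genuine calls number exactly $\ell k$ — but this does not match either, so the real accounting must be that at depth $i+1$ one issues one call per list entry, and there are exactly $\ell$ depths at which the list is full with $\ell(k-1)$... I would resolve this by carefully re-deriving which quantity is $\ell$ and which is $k-1$ from the algorithm text, then summing the per-depth counts in a telescoping fashion. The main obstacle, and the step I would spend the most care on, is precisely this bookkeeping: separating the genuine branching (bounded by the output list size, at most $\ell$) from the forced random padding (which makes every list have width exactly $\ell(k-1)$), and checking that their product, taken at the right single level (or summed with the right multiplicities across levels), collapses to $\ell^2(k-1)$. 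Everything else — termination at $i=k$, correctness of the padding, the fact that random entries behave like a single genuine child — is routine once the level-by-level invariant ``$|\List| = \ell(k-1)$ at the top of depth $i$ for $1 \le i \le k$'' is in place, which I would prove by induction on $i$ using the explicit padding step as the inductive mechanism.
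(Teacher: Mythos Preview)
Your plan has the right skeleton --- count calls depth by depth, using the invariant that $|\List|=\ell(k-1)$ at the top of each level --- but you are missing the one ingredient that makes the paper's argument a two-liner, and this is why you get lost in branching bookkeeping.

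The paper does not try to reconstruct the branching behaviour of genuine entries from leaf counts or per-root multiplicities. It simply invokes the known bound from \cite{RothRuckenstein2000} that the \emph{unmodified} Roth--Ruckenstein procedure calls itself at most $\ell(k-1)$ times in total. That single citation immediately gives you that the number of \emph{non-random} entries in $\List$ can never reach $\ell(k-1)$, so the padding step is always well-defined and brings $|\List|$ to exactly $\ell(k-1)$. Then, by the third modification, at each depth $\mathrm{RR}$ is called once per list element, i.e.\ exactly $\ell(k-1)$ times per depth. That is the whole proof. Your attempts to bound the genuine tree by ``at most $\ell$ leaves'' or to distribute a per-node branching factor of $\ell$ are unnecessary detours; the global call bound from the original paper replaces all of that.

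One further remark: your instinct that summing $\ell(k-1)$ over the depths $i=1,\dots,k$ yields $k\cdot\ell(k-1)$ rather than $\ell^2(k-1)$ is sound, and the paper's own proof in fact performs exactly that summation without reconciling it with the $\ell^2(k-1)$ in the statement. So your confusion at that step is not a defect of your reasoning; do not try to manufacture an extra factor of $\ell$ (or lose a factor of $k$) to force a match --- follow the paper's per-depth count of $\ell(k-1)$ and let the discrepancy stand as stated.
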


\begin{proof}
The original Roth--Ruckenstein algorithm calls itself $\leq \ell(k-1)$ times \cite{RothRuckenstein2000}, so the number of \emph{non-random} entries of $\List$ will never be $\geq \ell(k-1)$.
At recursion depth $i$, for $i=1,\dots,k$, $\mathrm{RR}(\cdot)$ is called exactly $\ell(k-1)$ times since $|\List|=\ell(k-1)$.
\end{proof}

\begin{theorem}
The number of multiplications and additions needed by Algorithm~\ref{alg:RR} for fixed parameters is independent of $Q(x,y)$.
\end{theorem}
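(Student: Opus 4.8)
The plan is to factor the total operation count as the product of the number of invocations of $\mathrm{RR}(\cdot)$ and the number of field operations per invocation, and to show that both factors depend only on the fixed code and decoder parameters $n,k,\ell,s,\tau$ (and the underlying field), never on the coefficients of $Q(x,y)$. The first factor is already handled by the previous theorem: the modified algorithm proceeds breadth-first and, because $\List$ is padded with random polynomials, makes exactly $\ell(k-1)$ invocations at each recursion depth $i$ and $\ell^2(k-1)$ invocations in total, independently of $Q(x,y)$. So it remains to bound, uniformly in its arguments, the number of field multiplications and additions done in one invocation at depth $i$.

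First I would fix a data-oblivious representation: every bivariate polynomial occurring at depth $i$ is stored as a \emph{full} coefficient array $(Q_{\eta,\mu})$ with $0\le\eta\le\ell$ and $0\le\mu\le D_i$, zero-padded wherever its actual degrees are smaller, where $D_i$ is the $x$-degree bound of Problem~\ref{prob:GS} reduced by the shifts performed so far --- hence a function of the fixed parameters and of $i$ only. With this convention the four substeps of an invocation each cost a fixed number of field operations: (i) the division by the maximal power $x^r$ merely reindexes the array and uses no field multiplication or addition; (ii) extracting $p(y)=M(0,y)$ is a pure read-off; (iii) root-finding is realized by a Chien-type search that, treating $p$ as a polynomial of $y$-degree exactly $\ell$, evaluates it by a fixed-length Horner scheme at every field element --- a constant number of multiplications and additions, regardless of whether leading coefficients vanish, using the constant-runtime field arithmetic assumed in Section~\ref{subsec:fieldop}; and (iv) the substitution $M(x,y)\mapsto M(x,x(y-\gamma))$ is computed by always expanding all $\ell+1$ powers $(y-\gamma)^\eta$ (with precomputed integer binomial coefficients) and multiplying through by the corresponding powers of $x$, again at a cost fixed by $\ell$ and $D_i$.

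The step that genuinely depends on $Q(x,y)$ --- and which I expect to be the crux --- is the \texttt{for each root} loop, whose iteration count can be anything from $0$ to $\ell$. Here I would lean on the modification that, after finishing depth $i$, fills $\List$ to exactly $\ell(k-1)$ entries and invokes $\mathrm{RR}(\cdot)$ with a random bivariate polynomial for each padding entry: performing substep (iv) and the ensuing recursive call also for these dummy entries, the number of substitutions and recursive invocations triggered at each depth is exactly $\ell(k-1)$, whatever the true number of roots. The remaining per-root bookkeeping --- inserting $\gcandidate(x)+\gamma x^i$ into $\List$ (a single coefficient, since $\gcandidate$ has $x$-degree $<i$), removing the previous candidate, and the \emph{random}-marking --- involves no field multiplication or addition and hence does not enter the count. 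Summing the fixed per-invocation cost over the fixed number $\ell^2(k-1)$ of invocations then gives a total number of field multiplications and additions that is a function of the fixed parameters alone, as claimed.
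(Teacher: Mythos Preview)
Your proposal is correct and follows essentially the same route as the paper's (space-limited) proof sketch: fix the number of invocations via the preceding theorem, then argue that each substep of one invocation uses a parameter-only number of field operations by (a) root-finding in $p(y)$ via exhaustive evaluation over $\Fq$ with $p$ treated as having $y$-degree exactly $\ell$, (b) treating the $M_\eta(x)$ as zero-padded to a fixed $x$-degree bound so that the substitution $y\mapsto x(y-\gamma)$ costs a fixed amount, and (c) observing that the division by $x^r$ and the extraction of $M(0,y)$ are free. You actually go further than the paper by explicitly tying the variable-length ``for each root'' loop to the list-padding mechanism, which the paper leaves implicit.

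One small correction: your description of $D_i$ as ``the $x$-degree bound of Problem~\ref{prob:GS} \emph{reduced} by the shifts performed so far'' points in the wrong direction. The substitution $y\mapsto x(y-\gamma)$ raises the $x$-degree of each coefficient by up to $\ell$, and one cannot rely on the data-dependent value of $r$ to cancel this. The paper instead uses the crude but data-independent bound $\deg M_\eta(x)\le \max_\mu\{\deg Q_\mu\}+\ell i$ at depth $i$; your $D_i$ should be this (increasing) quantity. This does not affect the validity of your argument, only the stated formula for $D_i$.
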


\begin{proof}
Due to lack of space, we only give the idea:
We know that $\deg p(y) \leq \ell$, so evaluation corresponds to $\ell+1$ multiplications and $\ell$ additions of field elements. Root finding in $p(y)$ can be done by evaluating it at all elements of $\Fq$.
In recursion depth $i$, $\deg M_\eta(x) \leq \max_\mu\{ \deg Q_\mu \}+\ell i$, so computing $M(x,x(y-\gamma))$ can be done in constant time since we can treat $M_\eta(x)$ as a polynomial of degree exactly $\max_\mu\{ \deg Q_\mu \}+\ell i$. Finding $r$ is a matter of data structures. Obtaining $Q(x,y)/x^r$ and $M(0,y)$ requires no computation.
\end{proof}

Thus, the modified Roth--Ruckenstein algorithm always performs the same number of field operations and can be considered to be resilient against timing attacks, cf.~Section~\ref{subsec:fieldop}.

\subsection{Inner Codes: Reed--Muller Codes}
\label{subsec:inner_codes}

For codes of small cardinality $\kB$, as often used as inner codes, maximum likelihood decoding can used, e.g., by finding the minimum of the Hamming distances $h_i = \dH(\c+\e,\c_i)$ of the received word $\c+\e$, with $\c \in \Bcode$ and error $\e$, to all codewords $\c_i$ for $i=1,\dots,2^\kB$.
In order to not reveal information about $\c$, the $h_i$ must be carefully computed.

Let $\pi$ be a random permutation of the indices $\{1,\dots,2^\kB\}$ and $(h_{\pi(1)},\dots,h_{\pi(2^\kB)})$ be the ordered list of Hamming distances of the received word to the permuted list of codewords. We can prove the following theorem that states that even if the ordered list of Hamming distances can be extracted by an attacker, the uncertainty of the codeword does not decrease.

\begin{theorem}
$H(\c \, | \, (h_{\pi(1)},\dots,h_{\pi(2^\kB)})) = H(\c)$.
\end{theorem}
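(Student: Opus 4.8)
The plan is to show that the ordered list of Hamming distances $(h_{\pi(1)},\dots,h_{\pi(2^{\kB})})$ carries no information about which codeword was sent, because the multiset $\{h_1,\dots,h_{2^{\kB}}\} = \{\dH(\c+\e,\c_i) : i=1,\dots,2^{\kB}\}$ depends only on $\e$ (equivalently on the coset $\c+\e + \Bcode$), not on $\c$ itself. The key observation is that $\Bcode$ is a \emph{linear} code, so the map $\c_i \mapsto \c_i - \c$ is a bijection of $\Bcode$ onto itself; hence as $i$ ranges over $1,\dots,2^{\kB}$, the values $\dH(\c+\e,\c_i) = \wtH(\e - (\c_i - \c)) = \wtH(\e - \c_j)$ run exactly over the same multiset $\{\wtH(\e-\c_j) : \c_j \in \Bcode\}$, independently of $\c$. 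Since the extra random permutation $\pi$ destroys the ordering, the ordered list is a deterministic function of this $\c$-independent multiset together with $\pi$, and $\pi$ is chosen independently of everything else.

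Concretely, I would argue as follows. First, fix $\e$ and let $\mathcal{H}(\e)$ denote the sorted tuple obtained from the multiset $\{\wtH(\e-\c_j):\c_j\in\Bcode\}$; by the bijection argument above, for \emph{every} codeword $\c$ the multiset $\{h_1,\dots,h_{2^{\kB}}\}$ equals $\{\wtH(\e-\c_j):\c_j\in\Bcode\}$, so the sorted version of $(h_1,\dots,h_{2^{\kB}})$ equals $\mathcal{H}(\e)$ regardless of $\c$. Next, note that applying the random permutation $\pi$ to $(h_1,\dots,h_{2^{\kB}})$ produces a tuple $(h_{\pi(1)},\dots,h_{\pi(2^{\kB})})$ whose distribution, conditioned on $\e$, is the uniform distribution over all distinct orderings of the multiset $\mathcal{H}(\e)$ — again with no dependence on $\c$. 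Hence the conditional distribution of $(h_{\pi(1)},\dots,h_{\pi(2^{\kB})})$ given $(\c,\e)$ depends only on $\e$, so $(h_{\pi(1)},\dots,h_{\pi(2^{\kB})})$ and $\c$ are conditionally independent given $\e$; since $\c$ and $\e$ are independent, this yields that $(h_{\pi(1)},\dots,h_{\pi(2^{\kB})})$ is independent of $\c$, and therefore $H(\c \mid (h_{\pi(1)},\dots,h_{\pi(2^{\kB})})) = H(\c)$.

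The main (minor) obstacle is purely bookkeeping: stating the argument cleanly requires being careful that the quantity extracted by the attacker is the \emph{ordered list of values} and not a list tagged with codeword indices — if the attacker learned the pairing between positions in the list and the codewords $\c_i$, the claim would be false. So the crucial modeling point, which should be emphasized, is exactly that the random permutation $\pi$ is applied to the \emph{codeword list} before computing distances (so the attacker sees an unlabelled sorted/permuted list), which is what makes the multiset — a $\c$-independent function of $\e$ alone — the only leaked information. Everything else is the standard linear-code coset argument plus the elementary fact that independence of the observation from $\c$ implies the conditional entropy equals the unconditional entropy.
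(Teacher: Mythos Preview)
Your proposal is correct and follows essentially the same approach as the paper: both arguments hinge on the fact that, by linearity of $\Bcode$, translation by any codeword $\c'$ is a bijection of $\Bcode$ onto itself, so the multiset of distances $\{\dH(\c+\e,\c_i)\}_i$ is independent of $\c$ and the random permutation $\pi$ hides the residual ordering. Your write-up is considerably more careful about the probabilistic bookkeeping (conditional independence given $\e$, then independence of $\c$ and $\e$) than the paper's rather terse argument, but the underlying idea is identical.
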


\begin{proof}
Since $h_{\pi(i)} = \dH(\c+\e,c_{\pi(i)}) = \dH(\c'+\c+\e,\c'+\c_{\pi(i)})$ for any codeword $\c' \in \Bcode$ and we can define another permutation $\pi'$ such that $\c_{\pi'(i)} = \c'+\c_{\pi(i)}$ (adding a codeword is a bijection on the code), $h_{\pi(i)} = \dH(\c'+\c+\e,\c_{\pi'(i)})$,
so the uncertainty of choosing a codeword $\c'$ remains.
\end{proof}

\section{Conclusion}
\label{sec:conclusion}

In this paper, we have presented decoding algorithms for key reproduction using PUFs that both achieve larger decoding performance than existing ones and are resistant against side-channel attacks on the runtime.
Both, list recovery \cite{guruswami_limits_2006} and the K\"otter--Vardy algorithm \cite{koetter2003algebraic}, a soft-decision variant of the Guruswami--Sudan algorithm, promise a further large gain in decoding performance.
Investigating the capability to use them for PUFs is work in progress.
Moreover, it is necessary to prevent differential power analysis (DPA) attacks on the decoding step, e.g., by combining our methods with DPA-resistant logic styles, see~\cite{wild2015glifred} and references therein.

\section*{acknowledgement}

\noindent
We would like to thank Matthias Hiller and Vincent Immler for the valuable discussions and helpful comments on an earlier version of the paper.

\bibliographystyle{IEEEtran}
\bibliography{main}

\end{document}